\documentclass{article}
\usepackage{amsmath, amssymb, amsthm}
\usepackage{xcolor}
\usepackage{graphicx}
\usepackage[vlined,ruled,linesnumbered]{algorithm2e}
\usepackage{booktabs}
\usepackage{tikz}
\usetikzlibrary{automata,positioning}

\usepackage{todo}
\usepackage{cancel}

\newtheorem{definition}{Definition}
\newtheorem{theorem}{Theorem}

\newcommand{\Dslash}{\mathcal{D}}
\newcommand{\mathC}{\mathbb{C}}
\newcommand{\calL}{\mathcal{L}}

\newcommand{\lc}{\mathrm{lc}}
\newcommand{\SAP}{\mathrm{SAP}}

\newcommand{\SetAlgorithmStyle}{
  \SetKwInput{Input}{input}
  \SetKwInput{Output}{output}
  \SetKwComment{tcpS}{\{}{\phantom{\}}}
  \SetKwComment{tcpM}{}{\phantom{\}}}
  \SetKwComment{tcpE}{}{\}}
  \SetKwFor{ForAll}{for all}{do}{end}
  \SetKwFor{For}{for}{}{end}
  \SetKwFor{While}{while}{do}{end}
  \SetKwFor{If}{if}{}{end}
  \SetArgSty{}
  \DontPrintSemicolon
}

\numberwithin{equation}{section}

\newtheorem{proposition}{Proposition}
\title{A multigrid accelerated eigensolver for the Hermitian Wilson-Dirac operator in lattice QCD}
\author{Andreas Frommer, Karsten Kahl, Francesco Knechtli, \\ Matthias Rottmann, Artur Strebel, Ian Zwaan}

\date{}

\begin{document}
\maketitle

\begin{abstract}
Eigenvalues of the Hermitian Wilson-Dirac operator are of special interest in several lattice QCD simulations, e.g., for noise reduction when evaluating all-to-all propagators.
In this paper we present a Davidson-type eigensolver that utilizes the structural properties of the Hermitian Wilson-Dirac operator $Q$ to compute eigenpairs of this operator corresponding to small eigenvalues.
The main idea is to exploit a synergy between the (outer) eigensolver and its (inner) iterative scheme which solves shifted linear systems. This is achieved by adapting the multigrid DD-$\alpha$AMG algorithm to a solver for shifted systems involving the Hermitian Wilson-Dirac operator.
We demonstrate that updating the coarse grid operator using eigenvector information obtained in the course of the generalized Davidson method is crucial to achieve good performance when calculating many eigenpairs, as our study of the local coherence shows.
We compare our method with the commonly used software-packages PARPACK and PRIMME in numerical tests, where we are able to achieve significant improvements, with speed-ups of up to one order of magnitude and a near-linear scaling with respect to the number of eigenvalues. For illustration we compare the distribution of the small eigenvalues of $Q$ on a $64\times 32^3$ lattice with what is predicted by the Banks-Casher relation in the infinite volume limit.   
\end{abstract}

\section{Introduction}\label{sec:intro}
In lattice Quantum Chromodynamics (QCD) the Wilson-Dirac operator describes the interaction between quarks and gluons in the framework of quantum field theory. Results of lattice QCD simulations represent essential input to several of the current and planned experiments in elementary particle physics (e.g., BELLE II, LHCb, EIC, PANDA, BES III).

Obtaining eigenpairs (eigenvalues and -vectors) of the Wilson-Dirac operator is an important computational task. For example, eigenpairs can be used to directly compute physical observables~\cite{Blossier:2010vz, DeGrand:2004qw, Endress:2014qpa, Foley:2005ac, Giusti:2004yp,  Neff:2001zr} or as a tool for noise reduction in stochastically estimated quantities like disconnected fermion loops~\cite{BaliSimeth}.
In most circumstances we are interested in a small to moderate amount of eigenvectors corresponding to the eigenvalues closest to zero, especially for the Hermitian Wilson-Dirac operator. As the Hermitian Wilson-Dirac operator is indefinite, these eigenvalues lie in the \emph{interior} of the spectrum.

Typically, computing interior eigenvalues is particularly expensive, which is why in this paper we develop efficient computational methods for the special case of the Hermitian Wilson-Dirac operator.
The most prominent methods for obtaining interior eigenvalues are \emph{shift-and-invert} algorithms which extend the basic inverse iteration approach; cf.~\cite{Saad11}. This includes methods ranging from the classical Rayleigh quotient iteration (RQI)~\cite{Saad11} to the generalized Davidson (GD) methods~\cite{Saad11} and its numerous variations like $\mbox{GD}+k$~\cite{Stathopoulos1}, Jacobi-Davidson (JD)~\cite{jacobidavidson} or JDCG/JDQMR~\cite{notay02, Stathopoulos1}. 
The generalized Davidson methods can, alternatively, also be regarded as a generalization of Arnoldi's method~\cite{morganscott}\footnote{In~\cite{morganscott} the authors relate generalized Davidson with the Lanczos method, but the statement also holds for non-Hermitian matrices.} with improved search directions.

Most Davidson-type methods share an \emph{inner-outer-scheme}, where the outer iteration finds approximations to the sought eigenpairs, while the inner iterations generates new search directions by approximately solving shifted linear systems
\begin{equation}\label{eq:shiftsolve}
  (A-\tau I)x = b,
\end{equation}
where $\tau$ is an approximation to a target eigenvalue.
In fact, these inversions make up the bulk of the computational work, and it is thus mandatory to find particularly efficient methods for this task.
In lattice QCD, adaptive algebraic multigrid methods have established themselves as the most efficient methods for solving linear systems with the Wilson-Dirac operator~\cite{MGClark2010_1,MGClark2007,FrKaKrLeRo11,Frommer:2013fsa,MGClark2010_2}.
They demonstrate significant speed-ups compared to conventional Krylov subspace methods, achieving orders of magnitudes faster convergence and an insensitivity to conditioning.
In this work, we use the adaptive domain decomposition algebraic multigrid method DD-$\alpha$AMG~\cite{FrKaKrLeRo11, Frommer:2013fsa}, but we expect the results for DD-$\alpha$AMG to carry over to the other aggregation-based multigrid implementations as well. {Originally DD-$\alpha$AMG is composed of an adaptive aggregation based multigrid construction and a red-black multiplicative Schwarz smoother (traditionally termed ``SAP'' for Schwarz alternating procedure in lattice QCD), but we also have the option to use other smoothers like GMRES in the DD-$\alpha$AMG framework.
}

So far, multigrid solvers have been limited to the unshifted Wilson-Dirac operator\footnote{The original solver can manage small shifts, but quickly deteriorates for increasing shifts as will be demonstrated later.}. This is due to the algebraic construction of the interpolation operator, which is built from approximations of eigenvectors corresponding to small eigenvalues, which is necessary for an effective 
overall error reduction in the unshifted case, cf.~\cite{FrKaKrLeRo11, Frommer:2013fsa}. 

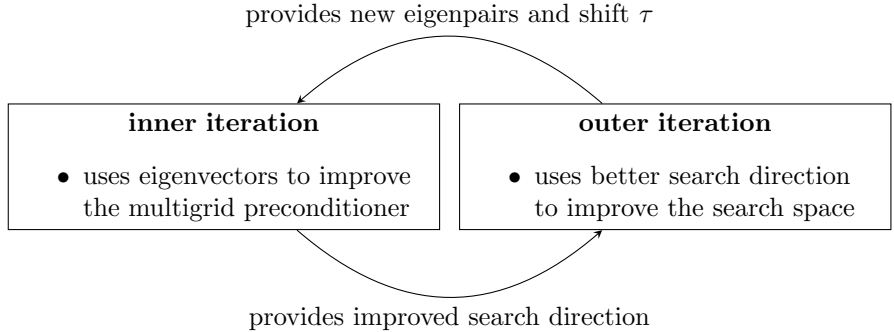
\begin{figure}[htb]
\centering
\begin{tikzpicture}[%
    >=stealth,
    node distance=6cm,
    on grid,
    auto
  ]
    \node[rectangle,draw] (A) [text width=5.5cm] {\centering\textbf{inner iteration}\begin{itemize} \item uses eigenvectors to improve the multigrid preconditioner\end {itemize}}; 
    \node[rectangle,draw] (B) [right of=A,text width=5.5cm]{\centering\textbf{outer iteration}\begin{itemize} \item uses better search direction to improve the search space\end{itemize}}; 
    \path[->, yscale=1.5] (A) edge[bend right] node [below] {provides improved search direction} (B);
    \path[->, yscale=1.5] (B) edge[bend right] node [above] {provides new eigenpairs and shift $\tau$} (A);
\end{tikzpicture}
\label{fig:methodsketch}
\caption{Interleaving eigenpair extraction and construction of the multigrid preconditioner for the inner iteration}
\end{figure}

During the progress of the eigensolver---when the shift $\tau$ in \eqref{eq:shiftsolve} becomes larger---the interpolation operator no longer approximates the space spanned by eigenvectors corresponding to small eigenvalues of the shifted operator, thus invalidating the coarse grid correction step and significantly slowing down convergence.
The main idea of this paper is that in order to overcome this problem we dynamically update the interpolation operator of the multigrid solver during the outer iteration.
This idea is sketched in Figure~\ref{fig:methodsketch}.
It illustrates how we are interleaving the eigenpair extraction with the construction of an efficient solver for \eqref{eq:shiftsolve}. While the outer iteration extracts eigenpairs, its eigenvectors are used to improve the multigrid preconditioner. In turn the multigrid method is an efficient preconditioner for \eqref{eq:shiftsolve}, which produces improved search directions for the outer iteration.

As an additional topic, we investigate several approaches for the most suitable smoothing method of the multigrid method for the Hermitian Wilson-Dirac operator in presence of large shifts.
{Altogether, we obtain a method which scales close to linearly with the number of eigenpairs to be computed.}

The remainder of this paper is organized as follows. 
We give a brief introduction into lattice QCD and algebraic multigrid methods in Section~\ref{sec:ddamg}. We put particular emphasis on showing and theoretically justifying how the algebraic multigrid approach, which has originally been developed for the non-Hermitian Wilson-Dirac operator, can be applied to the Hermitian Wilson-Dirac operator.
In Section~\ref{sec:eig} we proceed by introducing the generalized Davidson method and present our adaptations for the specific case of the Hermitian Wilson-Dirac operator, including the interleaving of the eigensolver and the construction of the multigrid solver used in the inner iteration. 
Numerical tests and comparisons with commonly used and state-of-the-art software are presented in Section~\ref{sec:tests}. 
As a simple example of the use of our method in physics, Section~\ref{sec:gap} discusses lattice artifacts observed for the spectral gap. Finally, a summary of our results is given in Section~\ref{sec:conclusion}. 

\section{Algebraic Multigrid Methods in Lattice QCD}\label{sec:ddamg}

The Dirac equation
\begin{equation}\label{Dirac_eq}
  \Dslash\psi + m \cdot \psi = \eta
\end{equation} 
describes the dynamics of quarks and the interaction of quarks and gluons.
Here, $\psi = \psi(x)$ and $\eta = \eta(x)$ represent quark fields. 
They depend on $x$, the points in space-time, $x=(x_0,x_1,x_2,x_3)$. 
The gluons are represented in the Dirac operator $\Dslash$, and $m$ is a scalar mass parameter that is independent of $x$ and sets the mass of the quarks in the QCD theory.

More precisely, $\Dslash$ is given as 
\begin{equation} \label{Dirac_continuum:eq}
\Dslash=\sum_{\mu=0}^3\gamma_\mu \otimes \left( \partial_\mu + A_\mu \right)\,,
\end{equation}
where $ \partial_\mu = \partial / \partial x_\mu$ and $A$ is the gluon (background) gauge field with 
the anti-Hermitian traceless matrices $A_\mu(x)$ being elements of $\mathfrak{su}(3)$, the Lie 
algebra of the special unitary group $\mathrm{SU}(3)$. The (Hermitian) $\gamma$-matrices 
$\gamma_0,\gamma_1,\gamma_2,\gamma_3 \in  \mathbb{C}^{4 \times 4}$ represent  generators of the 
Clifford algebra with
\begin{equation} \label{commutativity_rel:eq}
  \gamma_\mu \gamma_\nu + \gamma_\nu \gamma_\mu = 2\delta_{\mu\nu}I_4 \quad \text{ for } \mu,\nu=0,1,2,3,
\end{equation}
with $I_4$ the identity on $\mathC^4$.
Consequently, at each point $x$ in space-time, the spinor $\psi(x)$, i.e., the quark field $\psi$ at 
a given point $x$, is a twelve component column vector, each component corresponding to one of 
three colors (acted upon by $A_\mu(x)$) and four spins (acted upon by $\gamma_\mu$).  
For future use we remark that $\gamma_5 = 
\gamma_0\gamma_1\gamma_2\gamma_3$ satisfies
\begin{equation} \label{gamma_commutativity:eq}
     \gamma_5 \gamma_\mu = - \gamma_\mu \gamma_5, \enspace \mu=0,1,2,3,
\end{equation}
independently from the chosen representation.

The only known way to obtain predictions in QCD from first principles and non-perturbatively, is to discretize and then simulate on a computer. The discretization is typically formulated on an equispaced periodic $N_t \times N_s^3$ lattice $\mathcal{L}$ with uniform lattice spacing $a$, $N_s$ denoting the number of lattice points for each of the three spatial dimensions and $N_t$ the number of lattice points in the time dimension. A quark field $\psi$ is now represented by its values at each lattice point, i.e., it is
a spinor valued function $\psi: x \in \mathcal{L} \to \psi(x) \in \mathbb{C}^{12}$.

The {\em Wilson-Dirac} discretization is one of the most commonly used discretizations in lattice QCD simulations. It is obtained from the continuum equation by replacing the covariant derivatives by centralized covariant finite differences on the lattice. 
It contains an additional, second order finite difference stabilization term, as 
otherwise the discretization would suffer from `red-black' instability, cf.~\cite{smith1985numerical}.
The Wilson-Dirac discretization yields a local operator $D$ in the sense 
that it represents a nearest neighbor coupling on the lattice $\mathcal{L}$. 

Introducing shift vectors $ \hat{\mu} = 
(\hat{\mu}_0,\hat{\mu}_1, \hat{\mu}_2, \hat{\mu}_3)^T  \in \mathbb{R}^4 $ in 
dimension $\mu$ on $\mathcal{L}$, i.e., 
$$
  \hat{\mu}_{\nu} = \begin{cases} a & \mu=\nu \\ 0 & \text{else} \end{cases},
$$
the action of $D$ on a discrete quark field $\psi$ is given as
\begin{eqnarray}
  \hspace*{-0.75em}(D\psi)(x) = (m_0+\frac{4}{a}) \psi(x) 
              &\hspace*{-0.75em}-&\hspace*{-0.75em} \frac{1}{2a}\sum_{\mu=0}^3 \left( (I_4-\gamma_\mu)\otimes U_\mu(x)\right) \psi(x+\hat{\mu}) \nonumber \\
             &\hspace*{-0.75em}-&\hspace*{-0.75em} \frac{1}{2a}\sum_{\mu=0}^3 \left( (I_4+\gamma_\mu)\otimes U_\mu^H(x-\hat{\mu})\right) \psi(x-\hat{\mu}). \label{Wilson-Dirac:eq}
\end{eqnarray}
Here, the gauge-links $U_\mu(x)$ are now matrices from the Lie group SU(3), and the lattice indices $x\pm \hat{\mu}$ are to be understood periodically.
The mass parameter $m_0$ sets the quark mass (for further details, see~\cite{montvay1994quantum}), and we will write $D(m_0)$ whenever the dependence on $m_0$ is important. 

To explicitly describe $D$ we fix a representation for the $\gamma$-matrices in which $\gamma_5 = \left( \begin{smallmatrix} I_2 & 0 \\ 0 & -I_2 \end{smallmatrix} \right)$.

From \eqref{Wilson-Dirac:eq} we obtain the
couplings of the lattice sites $x$ and $x\pm\hat{\mu}$ as
\begin{equation} \label{DandDH_entries:eq}
(D)_{x,x+\hat{\mu}} = \tfrac{-1}{2a}(I_4-\gamma_\mu) \otimes U_\mu(x), \enspace 
(D)_{x,x-\hat{\mu}} = \tfrac{-1}{2a}(I_4+\gamma_\mu) \otimes U^H_\mu(x-\hat{\mu}),
\end{equation}
which shows
$
(D)_{x+\hat{\mu},x} = \tfrac{-1}{2a}(I_4+\gamma_\mu) \otimes U^H_\mu(x).
$
Thus, the commutativity relations \eqref{gamma_commutativity:eq} imply the symmetry 
\[
(\gamma_5\otimes I_3) \big(D\big)_{x,x+\hat{\mu}} = \big((\gamma_5\otimes I_3) \big(D\big)_{x+\hat{\mu},x}\big)^H. 
\]
With $\Gamma_5 = I_{n_{\mathcal{L}}} \otimes \gamma_5 \otimes I_3 $, $n_\mathcal{L}$ the number of lattice sites, this symmetry can be described on the level of the entire Wilson-Dirac operator as 
\begin{equation} \label{gamma_5_symmetry:eq}
    \Gamma_5 D = (\Gamma_5 D)^H.
\end{equation}
The matrix $\Gamma_5$ is Hermitian and unitary, and the $\Gamma_5$-symmetry \eqref{gamma_5_symmetry:eq} is a non-trivial, fundamental symmetry that the discrete Wilson-Dirac operator inherits from a corresponding symmetry of the continuum Dirac operator \eqref{Dirac_continuum:eq}.

The Wilson-Dirac operator and its clover-improved variant (where a term which is diagonal in space and time is added to reduce the local discretization error from $\mathcal{O}(a)$ to $\mathcal{O}(a^2)$) is an adequate discretization for the numerical computation of many physical observables.
For further details on discretization, its properties and the clover-improved variant, we refer the interested reader to~\cite{DeGrand:2006zz,Gattringer:2010zz,Sheikholeslami:1985ij}. In this paper we focus on the \emph{Hermitian} or \emph{symmetrized Wilson-Dirac operator} 
$$ Q:=\Gamma_5 D \, .$$

\paragraph{Algebraic multigrid methods.}
The state-of-the-art approaches for solving linear systems involving the (non-Hermitian) Wilson-Dirac operator $D$ are variants of aggregation-based adaptive algebraic multigrid methods, see~\cite{MGClark2010_1,MGClark2007,Frommer:2013fsa,MGClark2010_2}. 
Typically, these multigrid solvers are used as a (non-stationary) preconditioner within a flexible Krylov subspace method like FGMRES~\cite{fgmres} or GCR \cite{gcr}; see also~\cite{Luscher:2003qa,Luescher2007}.

The error propagator for the two-level version of all these multigrid approaches is 
\begin{equation}
  E_{2g} = (I-MD)^{\nu}(I-P D_{c}^{-1} R D)(I-MD)^{\mu},
\end{equation}
where $M$ denotes the smoother---which, in the case of DD-$\alpha$AMG, is given by the Schwarz alternating procedure (SAP)---and $\mu$ and $\nu$ denote the number of pre- and post-smoothing iterations, respectively. 
The operator $I-PD_c^{-1}RD$ is the coarse grid correction, where $P$ is the adaptively constructed aggregation based interpolation~\cite{MGClark2010_1,MGClark2007,Frommer:2013fsa,MGClark2010_2}, obtained in a ``setup'' phase, $R = P^H$ is the corresponding restriction and $D_c$ the Galerkin projected coarse grid operator $D_c = P^HDP$.

As is discussed in~\cite{BranKahl2014,MGClark2010_2}, this algebraic multigrid approach for $D$ can be transferred to one for $Q$ if the interpolation $P$ preserves spin structure in the sense that on the coarse grid we can partition the degrees of freedom per grid point into two groups corresponding to different spins and that we have  $\Gamma_5 P = P \Gamma_5^c$, where $\Gamma_5^c$ is diagonal with values $\pm 1$, depending on the spin on the coarse grid. Putting $Q_c = \Gamma_5^cD_c$ we then have 
\begin{equation}\label{eq:coarsegrid_for_Q}
  I-P Q_{c}^{-1} P^H Q = I-P D_c^{-1} \Gamma_5^c P^H \Gamma_5 D = I-P D_c^{-1} P^H D \, ,
\end{equation}
showing that the coarse grid error propagator for $D$ is identical to the coarse grid error propagator for $Q$ if we take the same $P$.  Note that the construction of $P$ in~\cite{MGClark2010_1,Frommer:2013fsa,MGClark2010_2} is indeed spin structure preserving, while this is not the case for the ``little Dirac'' construction found in \cite{Luescher2007}.

As a matter of fact the SAP smoothing used in DD-$\alpha$AMG is identical for $D$ and $Q$, as well, which can be shown by the following argument.
Mathematically, one step of SAP is a product of block projections, i.e., the error propagator is given by 
\begin{equation}
  E_{\SAP}:=\prod_{i=1}^b(I - \underbrace{I_{\calL_i} Q_i^{-1} I_{\calL_i}^H}_{:=M_{Q_i}}Q), 
\end{equation}
where $b$ is the number of subdomains, $\calL_i$ is the $i$-th subdomain of the lattice $\calL$, $I_{\calL_i}$ the trivial injection from $\calL_i$ into $\calL$, and $Q_i:= I_{\calL_i}^H Q I_{\calL_i}$ the block restriction of $Q$ on $\calL_i$.

Note that algorithmically, the calculations corresponding to $I_{\calL_i} Q_i^{-1} I_{\calL_i}^HQ$ can be performed in parallel for all blocks $i$ of the same color if we introduce a red-back ordering on the blocks.

With this we get the following proposition, in which we define $M_{D_i}$ and $D_i$ analogously to $M_{Q_i}$. 
\begin{proposition}~\label{thm:sap_smoothing}
The error propagator $E_{\SAP}(Q):=\prod_{i=1}^b(I - M_{Q_i}Q)$ is equivalent to $E_{\SAP}(D):=\prod_{i=1}^b(I - M_{D_i}D)$.
\end{proposition}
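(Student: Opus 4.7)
The plan is to show that the two products agree factor by factor, i.e.\ $I - M_{Q_i}Q = I - M_{D_i}D$ for every block index $i$, from which the identity of the full products follows immediately. Since $Q = \Gamma_5 D$, this reduces to proving $M_{Q_i}\Gamma_5 = M_{D_i}$.

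The key structural observation I would use is that $\Gamma_5 = I_{n_{\calL}} \otimes \gamma_5 \otimes I_3$ acts \emph{site-locally}: it is block-diagonal with respect to the partition of degrees of freedom by lattice site. Because each subdomain $\calL_i$ is a union of lattice sites, the trivial injection $I_{\calL_i}$ only selects entries site-wise and does not mix the spin/color components, so $\Gamma_5$ and $I_{\calL_i}$ ``commute'' in the sense
\[
    \Gamma_5 I_{\calL_i} = I_{\calL_i} \Gamma_5^{(i)}, \qquad I_{\calL_i}^H \Gamma_5 = \Gamma_5^{(i)} I_{\calL_i}^H,
\]
where $\Gamma_5^{(i)} := I_{|\calL_i|} \otimes \gamma_5 \otimes I_3$ is the analogous operator on $\calL_i$. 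In particular $\Gamma_5^{(i)}$ is again Hermitian and unitary, so $(\Gamma_5^{(i)})^{-1} = \Gamma_5^{(i)}$.

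Combining this with the definitions, the block restriction of $Q$ factors as
\[
    Q_i = I_{\calL_i}^H \Gamma_5 D I_{\calL_i} = \Gamma_5^{(i)} \, I_{\calL_i}^H D I_{\calL_i} = \Gamma_5^{(i)} D_i,
\]
and $Q_i$ is invertible iff $D_i$ is, with $Q_i^{-1} = D_i^{-1} \Gamma_5^{(i)}$. Substituting into the definition of $M_{Q_i}$ and moving $\Gamma_5^{(i)}$ back through the injection yields
\[
    M_{Q_i} = I_{\calL_i} D_i^{-1} \Gamma_5^{(i)} I_{\calL_i}^H = I_{\calL_i} D_i^{-1} I_{\calL_i}^H \, \Gamma_5 = M_{D_i} \Gamma_5.
\]
Hence $M_{Q_i} Q = M_{D_i} \Gamma_5 \cdot \Gamma_5 D = M_{D_i} D$, so $I - M_{Q_i}Q = I - M_{D_i}D$ and the two error propagators coincide.

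The only delicate point is really the commutation $\Gamma_5 I_{\calL_i} = I_{\calL_i} \Gamma_5^{(i)}$; everything else is a direct manipulation. This delicate point is not actually delicate, since it is built into the tensor structure $\Gamma_5 = I_{n_{\calL}} \otimes \gamma_5 \otimes I_3$ and the fact that the SAP subdomains are defined at the granularity of whole lattice sites. I would therefore make that structural observation explicit up front, then let the algebraic cancellation $\Gamma_5^2 = I$ do the rest.
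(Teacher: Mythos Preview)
Your proposal is correct and follows essentially the same approach as the paper: both proofs establish the factor-by-factor identity $I - M_{Q_i}Q = I - M_{D_i}D$ by exploiting the site-local structure of $\Gamma_5$ to obtain the commutation $I_{\calL_i}^H \Gamma_5 = \Gamma_5^{(i)} I_{\calL_i}^H$ (the paper writes $\Gamma_5^i$ for your $\Gamma_5^{(i)}$) together with $(\Gamma_5^{(i)})^{-1} = \Gamma_5^{(i)}$. The only cosmetic difference is that you first isolate the relation $M_{Q_i} = M_{D_i}\Gamma_5$ and then cancel $\Gamma_5^2 = I$, whereas the paper carries out the same cancellation in a single chain of equalities.
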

\begin{proof} We first note that $\Gamma_5$ is just a local positive or negative identity, so its block restriction $\Gamma_5^i := I_{\calL_i}^H\Gamma_5 I_{\calL_i}$ on $\calL_i$ not only satisfies $I_{\calL i}^H\Gamma_5 = \Gamma_5^i I_{\calL_i}^H$ but also $(\Gamma_5^i)^{-1} = \Gamma_5^i$. To prove the proposition we only need to show that the error propagators are identical for any given subdomain $i$:
\begin{equation}
\begin{aligned}
  I-M_{Q_i} Q &= I-(I_{\calL_i} Q_i^{-1} I_{\calL_i}^H) Q \\
          &= I-(I_{\calL_i} (I_{\calL_i}^H \Gamma_5 D I_{\calL_i})^{-1} I_{\calL_i}^H) \Gamma_5 D \\
          &= I-(I_{\calL_i} (\Gamma_5^i D_i)^{-1} I_{\calL_i}^H) \Gamma_5 D \\
          &= I-(I_{\calL_i} D_i^{-1} \Gamma_5^i I_{\calL_i}^H) \Gamma_5 D \\
          &= I-(I_{\calL_i} D_i^{-1} I_{\calL_i}^H \Gamma_5)\Gamma_5 D = I-M_{D_i} D.
\end{aligned}\label{eq:sap_smoothing}
\end{equation}
\end{proof}

The proposition states that SAP for $Q$ is equivalent to SAP for $D$ if the block inversions for the block systems $Q_i$ are performed exactly, which together with~\eqref{eq:coarsegrid_for_Q} implies that the DD-$\alpha$AMG method has the same error propagator, irrespective of whether it is applied to $Q$ or to $D$.
As observed in~\cite{Frommer:2013fsa} SAP smoothing works well for the standard Wilson-Dirac operator $D$ thus it also works well for the Hermitian Wilson-Dirac operator $Q$. However, if we perform only approximate block inversions in SAP---and this is what one typically does--- this situation becomes less clear; see Section~\ref{sec:tests}. 

Alternatively, instead of SAP one can use (restarted) GMRES as a  smoother for $Q$. For the non-Hermitian Wilson-Dirac operator $D$ this is used in the multigrid methods from~\cite{MGClark2010_1,MGClark2007,MGClark2010_2}, and since GMRES is also one of the most numerically stable Krylov subspace methods for indefinite systems, it is to be expected to work well as a smoother in a multigrid method for $Q$ as well.
Interestingly, for GMRES smoothing a connection between $Q$ and $D$ similar to what has just been exposed for SAP smoothing does not hold. We compare the above options for the smoothing method experimentally in Section~\ref{sec:tests}.

\section{Eigensolver}\label{sec:eig}

The \emph{generalized Davidson} (GD) method~\cite{Davidson,morganscott,Saad11} is an eigensolver framework which can be seen as a generalization of Arnoldi's method. Its advantage is that it does not rely on a Krylov subspace structure and thus offers a more flexible way of steering the search space $\mathcal{V}_m$ into a desired direction.
The method successively generates a set of orthogonal vectors $v_1,v_2,\ldots,v_m,$ which span the search space $\mathcal{V}_m$. An approximate eigenpair $(u,\theta)$ with $u \in \mathcal{V}_m$ is chosen such that the Ritz-Galerkin condition
\begin{equation}
  Au-\theta u\perp \mathcal{V}_m
\end{equation}
holds, which amounts to solving the (small and dense) $m\times m$ eigenvalue problem
\begin{equation} \label{eq:ritz}
  \left(V_m^HAV_m\right)s-\theta s = 0,  \text{ where } V_m = [v_1 \mid \cdots \mid v_m],  
\end{equation}
and then taking $u = V_m s$ with $s$ the eigenvector from \eqref{eq:ritz} whose eigenvalue $\theta$ is closest to the target eigenvalue. The search space is then extended by a new vector $t$ which is obtained as a function of the matrix $A$, {the approximate eigenvalue $\theta$}, and 
the eigenvector residual $r:=Au-\theta u$. The new vector $v_{m+1}$ is then retrieved after orthogonalizing $t$ against $v_1,\ldots,v_m$ and normalizing it.

For our work we focus on obtaining $t$ as an (approximate) solution of the \emph{correction equation} 
\begin{equation} \label{eq:correction}
  (A-\tau I)t = r,  
\end{equation}
where $\tau$ is an estimate for the target eigenvalue. 
The choice of $\tau$ steers the expansion of the search space, and with it the Ritz values, towards the desired eigenvalue regions, e.g., eigenvalues with smallest absolute value or with largest imaginary part.

For the (Hermitian) Wilson-Dirac operator, Davidson-type methods are to be preferred over Arnoldi's method due to the fact that Arnoldi's method would require exact solves of the correction equation to maintain its constitutive orthogonality relations, whereas Davidson-type
methods are tailored to accommodate approximate solutions, and these can be computed efficiently via some steps of multigrid preconditioned flexible GMRES.

A description of the generalized Davidson method to obtain one eigenpair is given in Algorithm~\ref{alg:gd1}.
Techniques for computing several eigenpairs and restarting will be reviewed in the subsequent section.

\begin{algorithm}[ht]
  \caption{Generalized Davidson (basic)}\label{alg:gd1}
  \SetAlgorithmStyle
    \Input{initial guess $t$, desired accuracy $\varepsilon$}
    \Output{eigenpair $(\lambda,x)$}
    $V=\emptyset$\;
    \For{$m=1,2,\ldots$} {
      $t = (I - VV^H)t$\; \label{alg:gd1:orth1}
      $v_m = t/||t||_2$\; 
      $V = [V \mid v_m]$ \; \label{alg:gd1:orth2}
      $H = V^HAV$\;
      get target eigenpair $(\theta,s)$ of $H$\;
      $u=Vs$\;
      $r = Au-\theta u$\;
      \If{$||r||_2 \leq\varepsilon$} {
        $\lambda = \theta$, $x=u$\; \Return\;
      }
      {compute $t$ as a function of $A$, $r$ and $\theta$} \label{alg:gd1:prec} 
    }
\end{algorithm}

\subsection{GD-$\lambda$AMG}\label{sec:gdlamg}
The method we propose for the Hermitian Wilson-Dirac operator is based on Algorithm~\ref{alg:gd1} but incorporates several adaptations for the Hermitian Wilson-Dirac operator and the underlying DD-$\alpha$AMG multigrid solver.

A first challenge is that we are confronted with a ``maximally indefinite'' interior eigenvalue problem, seeking the eigenvalues closest to zero, while the operator has a nearly equal amount of positive and negative eigenvalues.
The basic generalized Davidson method uses the Rayleigh Ritz procedure to determine the Ritz approximation by solving the standard eigenvalue problem \eqref{eq:ritz} for $H  = V_m^H A V_m$.
Ritz values approximate outer eigenvalues better and faster than the interior ones \cite{Saad11}, which is why we use \emph{harmonic Ritz values}~\cite{paige95} instead. 

\begin{definition}[Harmonic Ritz Values]
  A value $\theta\in\mathC$ is called a \emph{harmonic Ritz value} of $A$ with respect to a linear subspace $\mathcal{V}$ if $\theta^{-1}$ is a Ritz value of $A^{-1}$ with respect to $\mathcal{V}$.
\end{definition}

As the exterior eigenvalues of $A^{-1}$ are the inverses of the eigenvalues of $A$ of small modulus, harmonic Ritz values tend to approximate small eigenvalues well. Inverting $A$ to obtain harmonic Ritz values can be avoided with an appropriate choice for $\mathcal{V}$ as stated in the following theorem; cf.~\cite{jacobidavidson}.

\begin{theorem}\label{thm:harmonic}
  Let $\mathcal{V}$ be some $m$-dimensional subspace with basis $v_1,\ldots, v_m$.
  A value $\theta\in\mathC$ is a harmonic Ritz value of $A$ with respect to the subspace $\mathcal{W}:=A\mathcal{V}$, if and only if
  \begin{equation} \label{harmonic_Ritz:eq}Au_m-\theta u_m\perp A\mathcal{V} \text{ for some } u_m\in\mathcal{V}, u_m\neq 0.
  \end{equation}
  With 
  $$
  V_m:=[v_1|\ldots|v_m], \; W_m:= AV_m
  \text{ and } H_m := (W_m^HV_m)^{-1}W_m^HAV_m,
  $$
  \eqref{harmonic_Ritz:eq} is equivalent to 
  $$ H_ms=\theta s\text{ for some }s\in\mathC^{m},s\neq0\text{ and } u_m=V_ms.$$
\end{theorem}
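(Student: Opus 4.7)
The plan is to unfold the definition of harmonic Ritz value into the Galerkin-style condition \eqref{harmonic_Ritz:eq}, and then translate that geometric condition into a matrix equation by expressing elements of $\mathcal{V}$ in the basis $v_1,\dots,v_m$. These are the two equivalences asserted in the theorem, and they can be handled in sequence.

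For the first equivalence, I would start from the definition: $\theta$ is a harmonic Ritz value of $A$ with respect to $\mathcal{W}=A\mathcal{V}$ exactly when $\theta^{-1}$ is a Ritz value of $A^{-1}$ with respect to $\mathcal{W}$, i.e., there exists $w\in\mathcal{W}$, $w\neq 0$, with
\begin{equation*}
  A^{-1}w - \theta^{-1} w \;\perp\; \mathcal{W}.
\end{equation*}
Since $\mathcal{W}=A\mathcal{V}$, I would write $w=Au_m$ for a unique $u_m\in\mathcal{V}$ (using that $A$ is invertible, which is implicit in the harmonic setup), so that $A^{-1}w=u_m$. Substituting and multiplying through by $-\theta$ transforms the orthogonality relation into $Au_m-\theta u_m\perp A\mathcal{V}$, which is \eqref{harmonic_Ritz:eq}; the reverse direction just retraces these steps.

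For the second equivalence, I would write $u_m = V_m s$ with $s\in\mathbb{C}^m\setminus\{0\}$ and observe that orthogonality to $A\mathcal{V}=\mathrm{range}(W_m)$ is equivalent to $W_m^H(AV_m s-\theta V_m s)=0$, i.e.
\begin{equation*}
  W_m^H A V_m \, s \;=\; \theta\, W_m^H V_m \, s.
\end{equation*}
Left-multiplying by $(W_m^H V_m)^{-1}$ gives $H_m s = \theta s$, and the converse is immediate.

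The only real subtlety, and the place I would take care, is justifying that $W_m^H V_m = V_m^H A^H V_m$ is invertible so that the definition of $H_m$ makes sense; this holds because $A$ is nonsingular and $V_m$ has linearly independent columns, provided $\mathcal{V}$ is chosen so that $A\mathcal{V}$ and $\mathcal{V}$ are not in a degenerate relative position (a standing assumption in the harmonic Ritz framework, which can be noted in passing). Apart from this, the argument is essentially a change of variables $w\leftrightarrow u_m$ followed by a change of basis via $V_m$, with no heavy machinery required.
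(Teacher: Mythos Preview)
The paper does not actually supply a proof of this theorem; it merely states the result with a reference to~\cite{jacobidavidson} and then uses it. Your argument is correct and is precisely the standard derivation one finds in that literature: rewrite the Ritz--Galerkin condition for $A^{-1}$ on $\mathcal{W}=A\mathcal{V}$ via the substitution $w=Au_m$, then pass to coordinates through $u_m=V_ms$ to obtain the generalized eigenproblem $W_m^HAV_ms=\theta\,W_m^HV_ms$. Your remark about the invertibility of $W_m^HV_m$ being a standing nondegeneracy assumption is the right caveat; nothing further is needed.
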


Due to Theorem~\ref{thm:harmonic}, we can obtain harmonic Ritz values by solving the generalized eigenvalue problem
\begin{equation}
  W^H_m A V_m u = \theta W_m^H V_m u.
\end{equation}

The computational overhead compared to the standard Ritz procedure is dominated by $m^2$ additional inner products to build $W_m^H A V_m$. In our numerical tests, we have observed that this is compensated by a faster convergence of the generalized Davidson method, cf. Section~\ref{sec:tests}.

Although the multigrid approach is viable for the Hermitian Wilson-Dirac operator $Q$, it is, in practice, slower than for $D$. For exact solves of the subdomain systems in the SAP smoother, the discussion in Section~\ref{sec:ddamg} and Proposition~\ref{thm:sap_smoothing} implies that the convergence speeds for $Q$ and for $D$ are comparable as the error propagation operators are identical.
Though in computational practice, it is more efficient to do only approximate solves for the subdomain systems, using a small number of GMRES steps, for example. In this scenario the multigrid method becomes significantly slower when used for $Q$ rather than $D$, see Figure~\ref{fig:smoother} in Chapter~\ref{sec:tests}.
This slowdown can be countered by left-preconditioning the correction equation with $\Gamma_5$. This means that instead of solving~\eqref{eq:shiftsolve} with $Q$, we can transform it equivalently according to 
\begin{eqnarray}
                      &(Q-\tau I)t         &= r         \label{eq:no_g5_prec}\\
  \Longleftrightarrow &\Gamma_5(Q-\tau I)t &= \Gamma_5r \nonumber\\
  \Longleftrightarrow &(D-\tau \Gamma_5)t  &= \Gamma_5r.\label{eq:g5_prec}
\end{eqnarray}
The spectrum of the resulting operator $\Gamma_5Q(\tau):=D-\tau \Gamma_5$ has similarities to that of $D$ with some eigenvalues collapsing on the real axis. As we will see in Chapter~\ref{sec:tests}, this simple transformation speeds up the multigrid method significantly.
Figure~\ref{fig:specs} shows full spectra of $D$, $Q$ and $\Gamma_5Q(\tau)$ for a configuration on a small $4^4$ lattice.

\begin{figure}[ht]
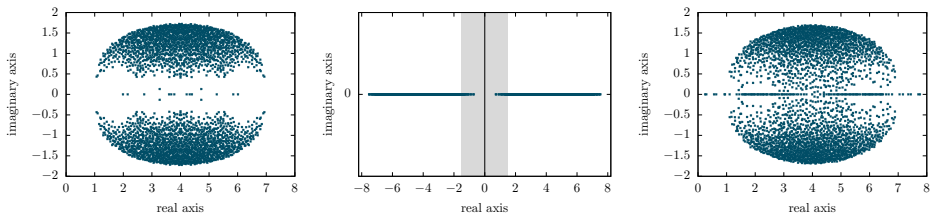

  \hspace{-1.8em}\scalebox{0.49}{\input{./plots/specD}}\scalebox{0.49}{\input{./plots/specQ}}\scalebox{0.49}{\input{./plots/specDplusG5shift}}
  \caption{Full Spectra of $D$, $Q$ and $\Gamma_5Q(\tau)$ for configuration $5$ (see Table~\ref{tab:confs}). The shaded area of $\operatorname{spec}(Q)$ highlights the eigenvalues we are particularly interested in.} 
  \label{fig:specs}
\end{figure}

\paragraph{Restarting and locking.}\label{sec:restart}

As the search space grows in every outer iteration, the storage and orthogonalization costs of the outer iteration in a generalized Davidson method eventually become prohibitively large. The following techniques reduce these costs in order to achieve a near-linear scaling in the number of computed eigenpairs.
The first technique is \emph{thick restarting}~\cite{jacobidavidson}.
When the search space reaches a size of $m_{\mathit{max}}$, we perform a restart by discarding the current search space. At the same time we keep the first $m_{\mathit{min}}$ smallest non-converged harmonic Ritz vectors and use them to span the search space at the beginning of the next restart cycle. We tuned the parameters $m_{\mathit{min}}$ and $m_{\mathit{max}}$ such that we have reason to assume that we keep both positive and negative harmonic Ritz values within the new search space. This way the eigensolver obtains a (nearly) equal amount of positive and negative eigenpairs in a uniform way.

In order to avoid re-targeting converged eigenpairs, we employ the concept of \emph{locking} converged eigenpairs~\cite{Stathopoulos2} as a second technique. Locking keeps the search space $\mathcal{V}$ orthogonal to the space of already converged eigenvectors $\mathcal{X}$.
In this manner, it is not required to keep converged eigenvectors in the search space which has the effect that the search space dimension becomes bounded independently of the number of eigenpairs sought. This in turn bounds the cost for computing the harmonic Ritz pairs. The new search direction still has to be orthogonalized against all previous eigenvectors, which leads to costs of order $\mathcal{O}(nk^2)$, since it consists of $k-1$ inner products for each of the $k$ eigenpairs. This is responsible for the fact that, in principle, the cost of our method scales superlinearly with $k$, and this becomes visible when $k$ becomes sufficiently large.

\paragraph{Local coherence and its effect on the correction equation.} \label{sec:update_interpolation}

The strength of algebraic multigrid methods relies on an effective coarse grid correction step and thus on the construction of the interpolation operator $P$. The methods in use for the Wilson-Dirac operator are all adaptive:  They require a setup phase which computes ``test vectors'' $w_i, i=1,\ldots,n_{\text{\emph{tv}}}$ which are collected as columns in the matrix $W = [w_1\mid\ldots\mid w_{n_{\text{\emph{tv}}}}]$. The test vectors are approximations to eigenvectors corresponding to small eigenvalues of the unshifted Wilson-Dirac operator $D$. The matrix $W$ is then used to build an aggregation based, ``block diagonal'' interpolation operator $P$, where each diagonal block constitutes an \emph{aggregate}, i.e., a block $\mathcal{A}_i$ of $W$ corresponding to the degrees of freedom of a block of the lattice $\calL$; see Figure~\ref{fig:buildP} and \cite{Frommer:2013fsa,MGClark2010_2}.

\begin{figure}[ht]
  \center\scalebox{0.5}{\input{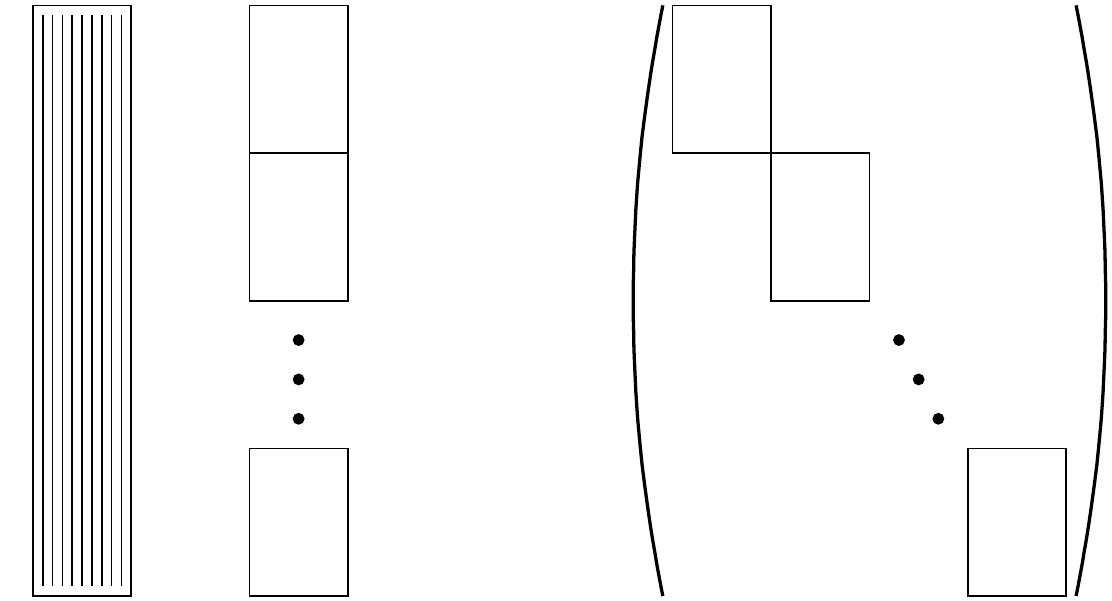_t}}
\caption{Matrix view of the construction of the aggregation based interpolation operator $P$.} \label{fig:buildP}
\end{figure}

By construction, the range of an aggregation based interpolation $P$ contains \emph{at least} the range spanned by the test vectors it is being built from. In~\cite{Luescher2007} it has been observed that eigenvectors belonging to small eigenvalues of the Wilson-Dirac operator $D$ are \emph{locally coherent} in the sense that these eigenvectors are locally similar, i.e., they are similar on the individual aggregates. This is the reason why the span of an aggregation based interpolation $P$ contains good approximations to small eigenpairs {\em far beyond} those which are explicitly used for its construction. This in turn explains the efficiency of such $P$ in the multigrid method.

We can study local coherence using the \emph{local coherence measure} $\lc$ of a vector $v$ defined as
\[
\lc(v) = \| \Pi v\|/\|v\|,
\]
where $\Pi$ denotes the orthogonal projection on the range of $P$. 
If $\lc(v)$ is close to 1, there is a good approximation to $v$ in the range of $P$, implying that the multigrid coarse grid correction reduces error components in the direction of $v$ almost to zero.

\begin{figure}[ht]
  \begin{center}
    \scalebox{0.25}{\includegraphics{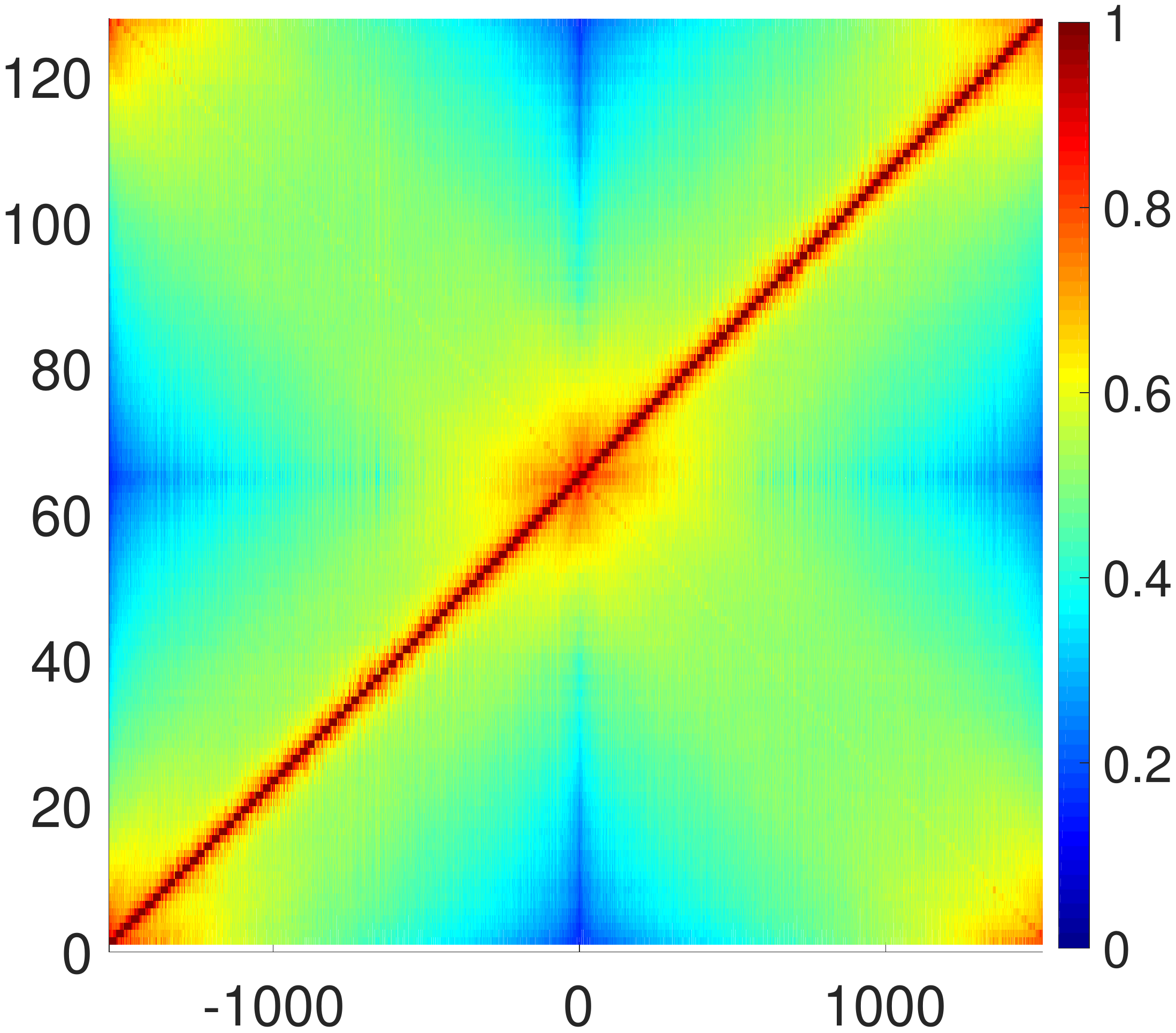}}
    \scalebox{0.25}{\includegraphics{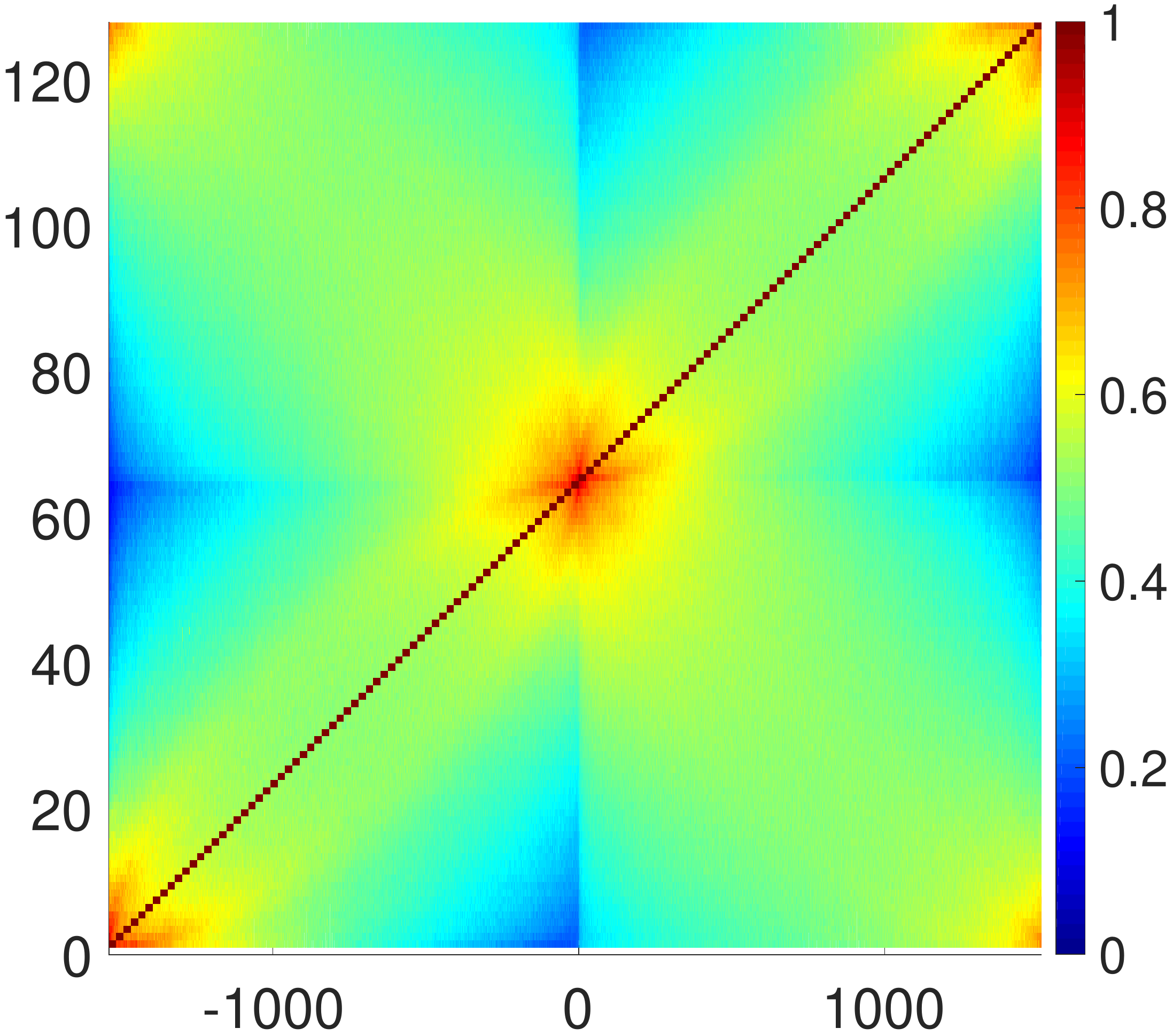}}\\
    \caption{Local coherence for $D$ (left) and $Q$ (right) for a $4^4$ configuration, cf. Table~\ref{tab:confs}.} \label{fig:lc_D_vs_Q}
  \end{center}
\end{figure}

Figure~\ref{fig:lc_D_vs_Q} gives values for $\lc(v)$ for the Wilson-Dirac operator $D$ and the corresponding Hermitian Wilson-Dirac operator $Q$ on a $4^4$ lattice.
Since this lattice is so small, we can compute the full spectrum ($12\cdot 4^4=3072$ eigenpairs) of both $D$ and $Q$. For each matrix we then consider a partitioning of the eigenvectors into 128 sets, each set consisting of 24 consecutive eigenpairs. Here, ``consecutive'' refers to an ordering based on the modulus and the sign of the real part; see the next paragraph for details.
For each of these ``interpolation sets'', the corresponding row displays the color coded value of $\lc(v)$ when projecting an eigenvector $v$ with the projection $\Pi$ corresponding to the aggregation-based interpolation $P$ built with the eigenvectors from that interpolation set as test vectors. The aggregates used were based on a decomposition of the $4^4$ lattice into $16$ sub-lattices of size $2^4$. Due to the spin structure preserving approach, we have two aggregates per sub-lattice, each built from the corresponding spin components of the $24$ test vectors\footnote{The projection $\Pi$ therefore projects onto a subspace of dimension $24 \cdot 2 \cdot 16 = 768$. If there were no local coherence at all, the expected value of $\lc$ is thus $768/(12 \cdot 4^4) = 0.25$.}. Of course $\lc(v)=1$ (dark red) if $v$ is from the respective interpolation set.

The numbering of the eigenvalues used in these plots is as follows:
The plot for $D$ has its eigenvalues with negative imaginary part in its left half, ordered by descending modulus and enumerated by increasing, negative indices including zero, $-1\,535, \ldots, 0$. The eigenvalues with positive imaginary part are located in the right half, ordered by ascending modulus and enumerated with increasing positive indices $1,\ldots,1536$.  
For $Q$ we just order the real eigenvalues by the natural ordering on the reals, using again negative and positive indices. Thus, for $D$ as for $Q$, eigenvalues small in modulus are in the center and their indices are small in modulus, while eigenvalues with large modulus appear at the left and right ends and their indices are large in modulus.

Although one must be careful when drawing conclusions from extremely small configurations, Figure~\ref{fig:lc_D_vs_Q}  illustrates two important phenomena. Firstly, local coherence appears for both $D$ and $Q$, but it is more pronounced for the non-Hermitian Wilson-Dirac matrix. This especially holds directly next to the interpolation sets (the diagonal in the plots).
Secondly, local coherence is particularly strong and far-reaching when projecting on the interpolation sets corresponding to the smallest and largest eigenpairs in absolute values. In the center of both plots, we observe a star-shaped area with particularly high local coherence. This area corresponds to around $10\%$ of the smallest eigenvalues. 
To a lesser extent, local coherence is also noticeable for the other parts of the spectrum, as we consistently observe higher values for $\lc(v)$ for eigenvectors close to the respective interpolation set.

The right part of Figure~\ref{fig:lc_D_vs_Q_large} reports similar information for the Hermitian Wilson-Dirac operator $Q$ coming from a larger, realistic configuration on a $64 \times 32^3$ lattice.  For lattices of this size we cannot compute the full spectrum, thus we show the values for the $984$ smallest eigenpairs, subdivided in 41 interpolation sets, each consisting of 24 consecutive eigenpairs. The aggregates were this time obtained from $4^4$ sublattices. For comparison, the left part of the figure shows a zoomed-in part of the local coherence plot for $Q$ for the $4^4$-lattice from Figure~\ref{fig:lc_D_vs_Q}. 

\begin{figure}[ht]
  \begin{center}
    \scalebox{0.23}{\includegraphics{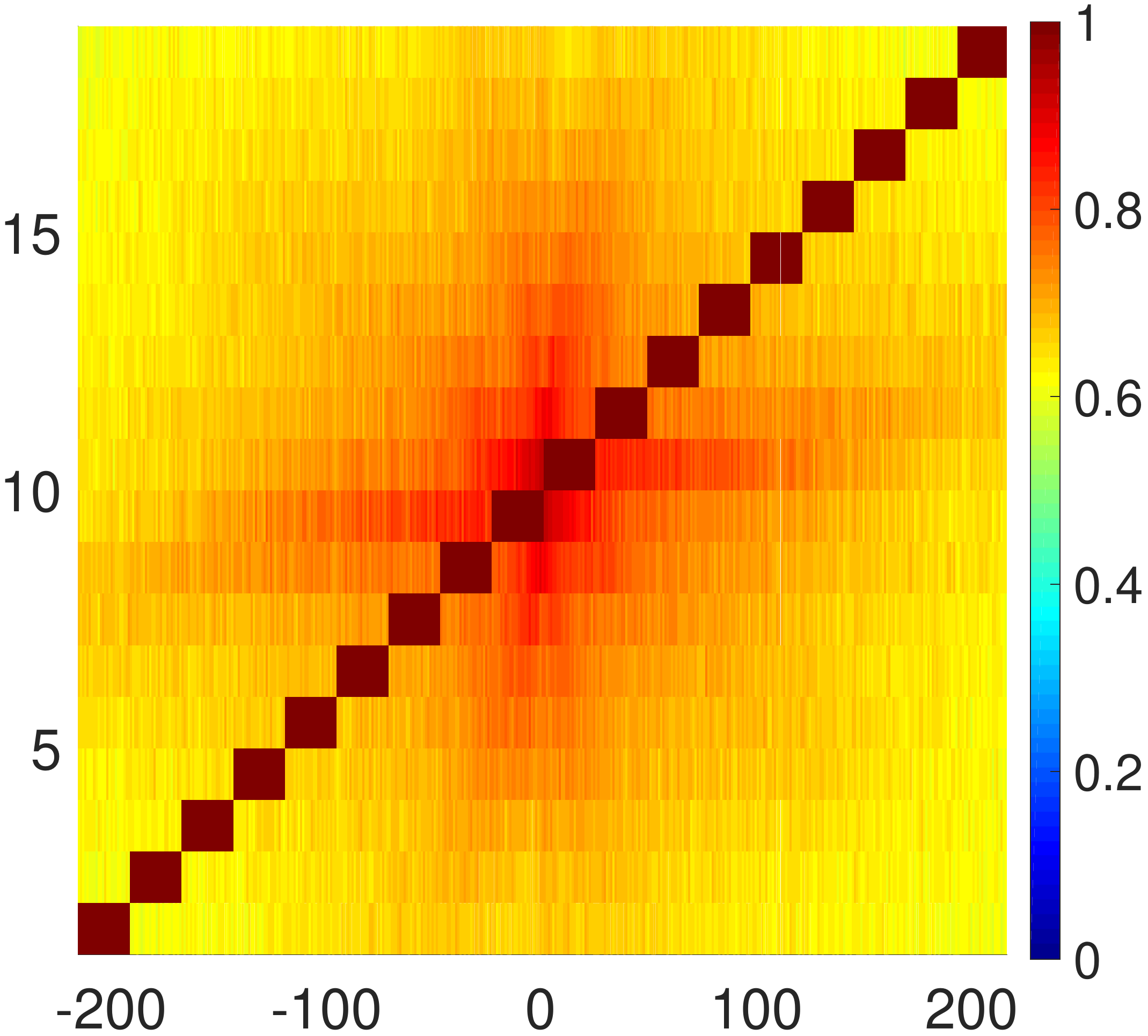}}
    \scalebox{0.23}{\includegraphics{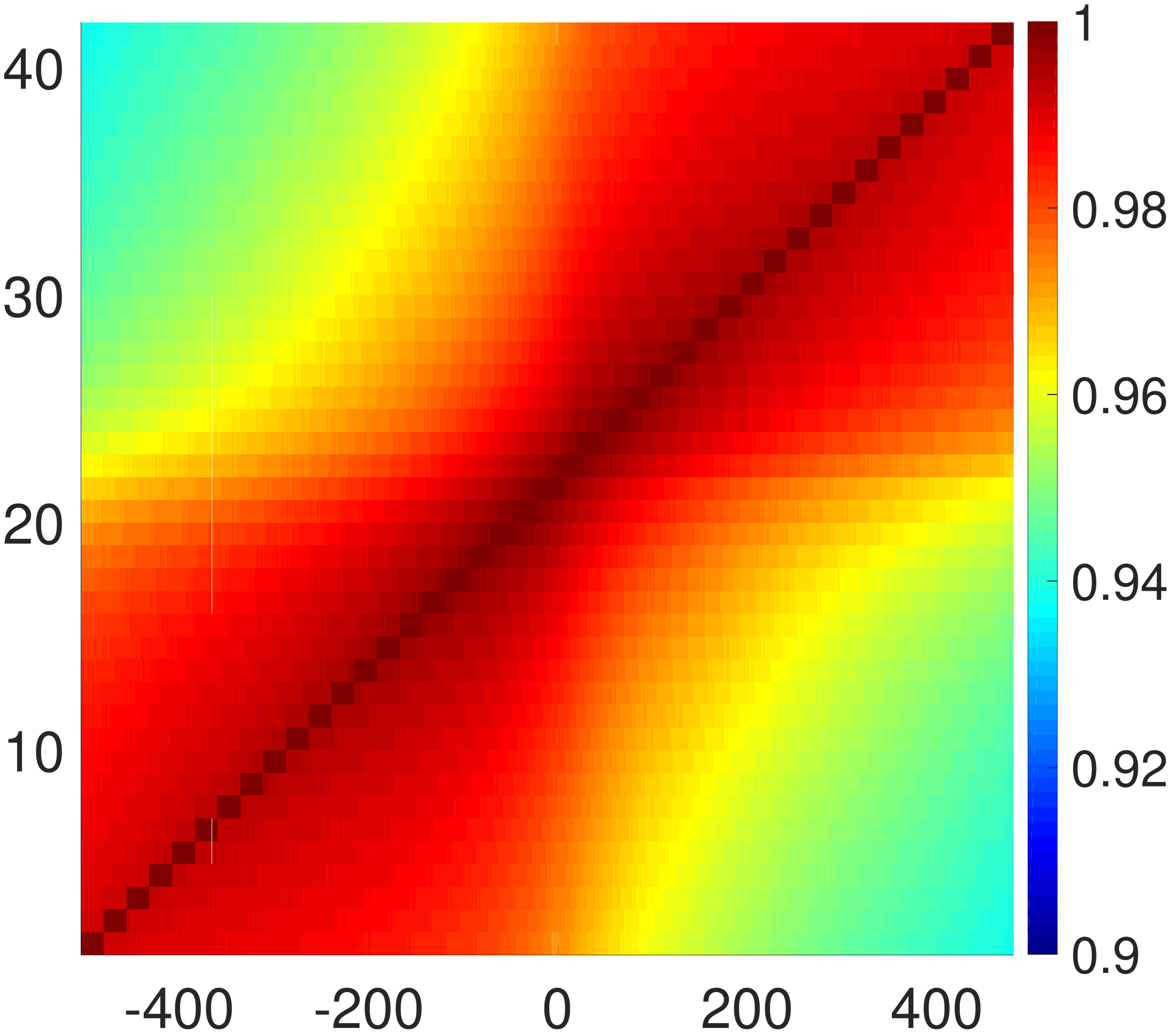}}\\
    \caption{Local coherence for $Q$ for different lattices, focusing on the eigenpairs closest to zero. \emph{Left:} $432$ eigenpairs of a $4^4$ lattice. \emph{Right:} $984$ eigenpairs of a $64\times32^3$ lattice.} \label{fig:lc_D_vs_Q_large}
  \end{center}
\end{figure}

First note that the colors encode different values in the left and right part of Figure~\ref{fig:lc_D_vs_Q_large}.
Local coherence does not drop below $0.9$ for the large configuration, while for the small configuration it goes down to $0.6$. On the other hand, $984$ eigenvalues only correspond to a minuscule fraction of roughly $4\cdot 10^{-3}\%$ of the total of  $12\cdot 32^3\cdot64 = 25\,165\,824$ eigenvalues, which is much less than the roughly $10\%$ depicted for the small configuration. For the interpolation operator of the large configuration we used aggregates corresponding to $4^4$ sub-lattices, which give a total of $2\times 2^{13} = 16\, 384$ aggregates. In relative terms, this is several orders of magnitude finer as for the $4^4$ lattice. This finer aggregation leads to interpolation operators with increased faculties to recombine information, which explains the resulting higher local coherence.

Both parts of Figure~\ref{fig:lc_D_vs_Q_large} show that local coherence drops off for eigenvectors farther away from the interpolation set. For the large configuration, we see, for example, that local coherence of the vectors from the last interpolation set with the second-to-last interpolation set is very high as indicated by the deep red color in the top right corner of the plot. The local coherence of these vectors with respect to the central interpolation set (which contains the eigenpairs with eigenvalues closest to 0) is significantly smaller, indicated by the yellow color at the middle of the right-hand boundary of the plot.    
In the scenario where we choose the shift $\tau$ in the correction equation~\eqref{eq:g5_prec} farther away from zero---as we are targeting eigenpairs close to $\tau$---a coarse grid operator constructed using the eigenpairs closest to zero thus becomes increasingly less effective, reducing the overall convergence speed of the multigrid method significantly. 
To remedy this, we propose a dynamical interpolation updating approach, resulting in a coarse grid operator that remains effective on the span of the eigenvectors with eigenvalues close to the value of $\tau$ set in the outer iteration of the generalized Davidson method. 
In the course of the outer iteration, once enough eigenpairs are available, we therefore rebuild the interpolation, and with it the coarse grid operator, using the already converged eigenvectors which are \emph{closest} to the currently targeted harmonic Ritz value.
Once a harmonic Ritz value converged to an eigenvalue and we choose a new target value $\tau$ that has the same sign as the previous target, we replace one eigenvector from the interpolation set---the one farthest away from $\tau$---with the newly converged eigenvector, and update the multigrid hierarchy. If the new $\tau$ has its sign opposite to the previous one we replace the full interpolation set with converged eigenvectors closest to the new $\tau$, and again update the multigrid hierarchy. 
The updates of the interpolation and coarse grid operators involve some data movement and local operations, but their cost is minor compared to the cost of the other parts of the computation.

With this approach, the coarse grid is always able to treat the eigenspace closest to our current harmonic Ritz approximation efficiently and makes optimal use of the existing local coherence. 
This results in a significantly faster multigrid solver when larger shifts are used, i.e., when a large number of small eigenpairs has to be computed. Since the solution of these shifted systems accounts for most of the work in the eigensolver this approach improves the eigenvalue scaling to a nearly linear complexity, as seen in Section~\ref{sec:tests}.

\begin{algorithm}[H]  \caption{GD-$\lambda$AMG}\label{alg:gd2}
  \SetAlgorithmStyle
    \Input{Hermitian Dirac operator $Q$, no. of eigenvalues $n$, no. of test vectors $n_{tv}$, min. and max. subspace size $m_{min}$ and $m_{max}$, initial guess $[v_1,\ldots,v_{n_{tv}},t]=:[V|t]$, desired accuracy $\varepsilon_{outer}$}
    \Output{set of $n$ eigenpairs $(\Lambda,X)$}
    $\Lambda = \emptyset$, $X=\emptyset$\;
    \For{$m=n_{tv}+1,n_{tv}+2,\ldots$} {
      $t = (I - VV^H)t$, $t = (I - XX^H)t$\;
      $ v_m = t/||t||_2$\;
      $V = [V|v_m]$ \;
      get all $(\theta_i,s_i)$ with $(QV)^H(QV)s_i = \theta_i(QV)^HVs_i$\;
      find smallest (in modulus) $\theta_i\notin\Lambda$\;
      $u = Vs_i$, $r = Qu-\theta_i u$\;
      \If{$||r||_2 \leq\varepsilon_{outer}$} { 
        \tcp{current eigenpair has converged}
        $\Lambda = [\Lambda,\theta_i]$, $X=[X,u]$ \;
        update smallest (in modulus) $\theta_i\notin\Lambda$\;
        $u = Vs_i$, $r = Qu-\theta_i u$\;
        rebuild interpolation using the $n_{tv}$ eigenvectors $x_j$ with eigenvalue $\lambda_j$ closest to $\theta_i$ and update multigrid hierarchy\;
      }
      \tcp{solve correction equation}
      $t = \texttt{DD-}\alpha\texttt{AMG}((D-\theta_i\Gamma_5),\Gamma_5r)$\;\label{alg:gd2:solve}
      \tcp{restart}
      \If{$m\geq m_{max}$ } {
        get $(\Theta, S)$ as all eigenpairs $(\theta_i,s_i)$ of $(QV)^H(QV)s_i = \theta(QV)^H Vs_i$\;
         sort $(\Theta, S)$ by ascending modulus of $\Theta$\;
        \For{$i=1,\ldots, m_{min}$} {
          $V_i = Vs_i$\;
          $(QV)_i = (QV)s_i$\;
        }
          retain first $m_{min}$ vectors of $V$ and $QV$\;
      }
    }
\end{algorithm}

A summary of our eigensolver, termed GD-$\lambda$AMG, a generalized Davidson method with algebraic multigrid acceleration, is given as Algorithm~\ref{alg:gd2}.

\section{Numerical Tests}\label{sec:tests}
In this section we present a variety of numerical tests to analyze the efficiency of the GD-$\lambda$AMG eigensolver. 

Table~\ref{tab:confs} contains information on the gauge configurations we use in our tests. 
The two small configurations on a $4^4$ and a $8^4$ lattice were generated using our 
own heat-bath algorithm. The configurations on the larger lattices (configurations $1$ to $4$) 
were provided by our partners at the University of Regensburg within the Collaborative Research Centre SFB-TRR55; see \cite{Bali:2014gha}. For these configurations, 
we actually use \emph{clover improved} (see \cite{Sheikholeslami:1985ij}) Wilson-Dirac operators $D$ and $Q$, where a block diagonal term  with $6\times 6$ diagonal blocks is added to improve the lattice discretization error from 
$\mathcal{O}(a)$ to $\mathcal{O}(a^2)$. The resulting modified $D$ is still 
$\Gamma_5$-Hermitian. The mass parameter $m_0$ from \eqref{Wilson-Dirac:eq} is chosen such that $a \cdot m_0=\frac{1}{2\kappa}-4$ for computations with configurations $1$ to $4$. For the small configurations we chose $m_0$ such that we obtain a comparable conditioning of the matrix.
A second table, Table~\ref{tab:params}, shows the default algorithmic parameter settings we used within GD-$\lambda$AMG. For a more detailed explanation of these parameters we refer to \cite{Frommer:2013fsa}.

\begin{table}
  \begin{center}
    \begin{tabular}{cccccr}
      \toprule 
      ID  & lattice size      & hopping parameter & mass parameter & clover term & CPU    \\
          & $N_t\times N_s^3$ & $\kappa$          & $a\cdot m_0$        & $c_{sw}$    & cores  \\
      \midrule
      $1$ & $48\times24^3$    & 0.13620           & $-0.3289$      & $1.9192$ & $648$     \\ 
      $2$ & $64\times32^3$    & 0.13632           & $-0.3322$      & $1.9192$ & $1,\!024$ \\
      $3$ & $64\times40^3$    & 0.13632           & $-0.3322$      & $1.9192$ & $2,\!000$ \\
      $4$ & $64\times64^3$    & 0.13632           & $-0.3322$      & $1.9192$ & $4,\!096$ \\
      $5$ & $ 4\times 4^3$    &  --               & $-0.7867$      & $0$      & --        \\
      $6$ & $ 8\times 8^3$    &  --               & $-0.7972$      & $0$      & --        \\
      \bottomrule
    \end{tabular}
  \end{center}
  \caption{Configurations used for numerical tests. Configurations $1$--$4$ are configurations from the ensembles II and IV--VI from~\cite{Bali:2014gha}. 
  Configurations $5$ and $6$ were generated locally and are mainly used for small scale MATLAB experiments.}
  \label{tab:confs}
\end{table}

\begin{table}
  \begin{center}
    \begin{tabular}{rllc}
      \toprule
                           & parameter                            & symbol                       & default         \\
      \midrule
      DD-$\alpha$AMG setup & number of test vectors               & $n_\mathit{tv}$              & $24$            \\
                           & setup iterations                     &                              & $6$             \\
                           & (post-)smoothing steps               &                              & $4$             \\
      \midrule
      DD-$\alpha$AMG solve & relative residual                    & $\varepsilon_\mathit{inner}$ & $10^{-1}$       \\
                           & maximum iterations                   &                              & $5$           \\
                           & coarse grid tolerance                &                              & $5\cdot10^{-1}$ \\

      \midrule
      eigensolver method   & relative eigenvector residual        & $\varepsilon_\mathit{outer}$ & $10^{-8}$       \\
                           & number of eigenpairs                 &                              & $100$           \\
                          & minimum subspace size                 & $m_{min}$                              & $30$           \\
                           & maximum subspace size                & $m_{max}$                             & $50$           \\
      \bottomrule

    \end{tabular}
  \end{center}
  \caption{List of algorithmic \emph{default} parameters.}
  \label{tab:params}
\end{table}

The numerical results involving configurations $1$--$4$ were obtained on the JURECA and JUWELS clusters at the J\"ulich Supercomputing Centre~\cite{wwwJURECA, wwwJUWELS}, while results involving the other lattices were obtained on a smaller workstation.
We will compare our results with the state-of-the-art library PRIMME and with PARPACK, and we start by outlining their underlying basic algorithms.\\

\textbf{PRIMME} 
(PReconditioned Iterative MultiMethod Eigensolver) \cite{PRIMME, svds_software} implements a broad framework for different Davidson-type eigensolvers. Its performance is best if it is given an efficient routine to solve linear systems with the matrix $A$, and we do so by providing the $\Gamma_5$-preconditioned DD-$\alpha$AMG solver. %
There are two key differences compared to GD-$\lambda$AMG:
\begin{itemize}
  \item The interpolation cannot be updated efficiently within the PRIMME framework (at least not without expert knowledge on the underlying data structures), hence we do not update it for this method.
  \item PRIMME uses a Rayleigh Ritz instead of a harmonic Ritz approach to extract eigenvalue approximations.
\end{itemize}

PRIMME has a fairly fine-tuned default parameter set, e.g., for subspace size or restart values, and is able to dynamically change the eigensolver method. We keep the default settings and provide the same multigrid solver to PRIMME as we do for GD-$\lambda$AMG.\\

\textbf{PARPACK} (Parallel ARnoldi PACKage) \cite{wwwPARPACK} is a somewhat older but widely used software for the computation of eigenvalues of large sparse matrices. It is based on an implicitly restarted Arnoldi method, which is originally designed to find extremal eigenvalues.
It is possible to transform an interior problem into an exterior one using a filter polynomial, i.e., a polynomial which is large on the $k$ interior eigenvalues we are looking for and small on the remaining ones. To construct such a polynomial, for example as a Chebyshev polynomial, we need information on the eigenvalue $\lambda_{max}$ which is largest in modulus and the $(k+1)$st smallest in modulus, $\lambda_{k+1}$. 
While $\lambda_{max}=8$ is a sufficiently good estimate for the Hermitian Wilson-Dirac matrix $Q$, no a-priori guess for $\lambda_{k+1}$ is available in realistic scenarios.
For our tests, we run one of the other methods to compute the first $k$ eigenvalues and then use a slightly larger value as a guess for $\lambda_{k+1}$.
While this approach obviously costs a lot of additional work and actually makes the subsequent Arnoldi method obsolete, it is a good reference for a \emph{near-optimally} polynomially filtered Arnoldi method.
Since this approach does not require inversions of the matrix $Q$, the parameter set for this method is rather small. We use a degree ten Chebyshev polynomial as the filter polynomial and set the maximum subspace size to be twice the number of sought eigenpairs. The required eigenvector residual is set to $10^{-8}$, as with the other methods.

\subsection{Algorithmic tuning}
\paragraph{Solving the correction equation.}

Each step of Algorithm~\ref{alg:gd2} uses DD-$\alpha$AMG in line~\ref{alg:gd2:solve} to solve the $\Gamma_5$-preconditioned correction equation $(D-\tau \Gamma_5)t = \Gamma_5r$. More precisely, as indicated by the parameters given in the middle of Table~\ref{tab:params}, we stop the outer (FGMRES) iteration of DD-$\alpha$AMG once the initial residual norm is reduced by a factor of $0.1$ or a maximum of 5 iterations is achieved. Within each DD-$\alpha$AMG iteration we require a reduction of the residual by a factor of 0.5  when solving the system on the coarsest level. 
 Table~\ref{tab:g5_prec} shows that the $\Gamma_5$-preconditioning yields indeed significant gains in compute time.

\begin{table}[ht]
  \begin{center}
    \begin{tabular}{lrrr}
      \toprule
      correction equation               & \multicolumn{2}{c}{iterations} & Time       \\
                        & outer          & inner         & in core-h. \\
      \midrule
      Eq.~\eqref{eq:no_g5_prec}: $(Q-\tau I)t=r$               &           565  &        10,349 &     83.0   \\
      Eq.~\eqref{eq:g5_prec}: $(D-\tau\Gamma_5)t=\Gamma_5r$ &           511  &         3,045 &     41.3   \\
      \bottomrule
    \end{tabular}
  \end{center}
\caption{Impact of $\Gamma_5$-preconditioning for the computation of $100$ eigenpairs of configuration $1$ (see Table~\ref{tab:confs}).}\label{tab:g5_prec}
\end{table}

A variant of generalized Davidson methods solves, instead of the correction equation~\eqref{eq:correction}, the Jacobi-Davidson projected ~\cite{jacobidavidson} system $(I-uu^H)(A-\theta I)(I-uu^H)$, where $u$ is the last (harmonic) Ritz vector approximation. This will avoid stagnation in the case that the correction equation is solved \emph{too exactly}. There are theoretically justified approaches which adaptively determine how accurately the projected system should be solved in each iteration.
Since we solve the correction equation to quite low relative precision ($10^{-1}$ only), we could not see a benefit from using the Jacobi-Davidson projected system. Indeed, even with the adaptive stopping criterion, this approach increased the compute time by approximately 15\%.

\paragraph{Impact of the smoother}
The original DD-$\alpha$AMG method uses SAP as a smoother and we have shown in Chapter~\ref{sec:ddamg} that SAP is also applicable for the Hermitian Wilson-Dirac operator $Q$, yielding the same error propagation operator as long as the individual block systems are solved exactly. 
We now compare a cost-efficient, approximate SAP and GMRES as smoothers within the multigrid methods constructed for the matrices $D-\tau I$, $Q-\tau I$ and $D-\tau\Gamma_5$, where $\tau$ ranges from $0$ to $0.5$ for configuration $6$.
Note that $D-\tau I$ is not relevant for this work, since it would arise when computing eigenpairs for $D$. We still include the results here to be able to compare the performance of DD-$\alpha$AMG for $Q-\tau I$ and $D-\tau\Gamma_5$ with the performance of DD-$\alpha$AMG for $D-\tau I$.

\begin{figure}[ht]
  \center\scalebox{.54}{\includegraphics{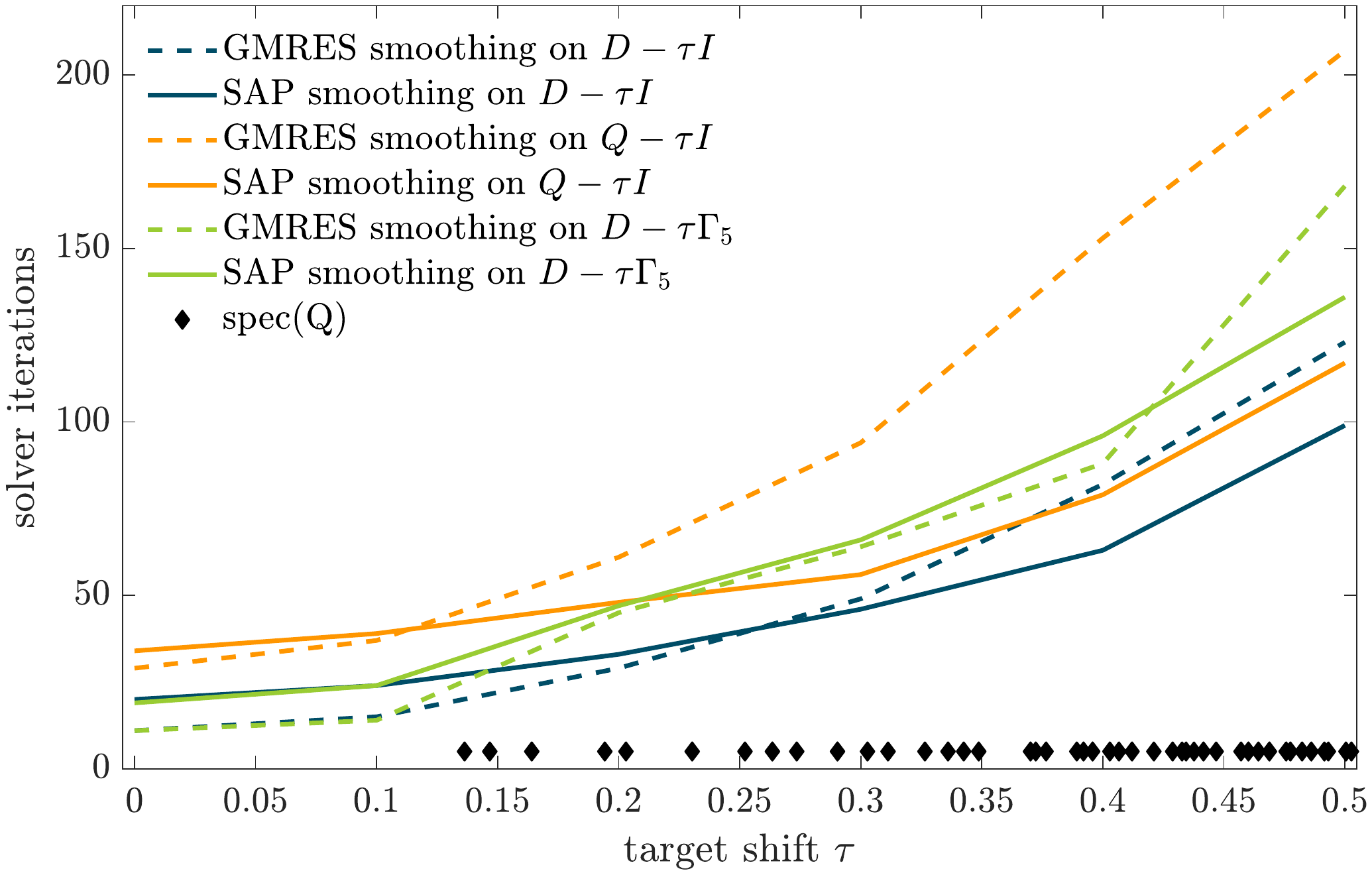}}
  \caption{Comparison of iteration counts of the DD-$\alpha$AMG method using either SAP or GMRES smoothing for configuration $6$ and increasing target shifts $\tau$. The black diamonds at the bottom depict the eigenvalue distribution of $Q$.}
  \label{fig:smoother}
\end{figure}

Figure~\ref{fig:smoother} shows a scaling plot with respect to the target shift $\tau$ for configuration $6$.
For this plot, we used a two-level DD-$\alpha$AMG method with six steps of the adaptive setup procedure to generate the coarse grid system. {The GMRES smoother requires a reduction of the residual by a factor of $10^{-1}$ or after a maximum of ten iterations have been performed. Similarly, the SAP smoother performs three sweeps of SAP, where each block solve is performed using GMRES until a reduction of the residual by a factor of $10^{-1}$ was achieved for the individual block of after a maximum of ten iterations have been performed.
This way the computational work for both smoothers is roughly comparable.}

Figure~\ref{fig:smoother} verifies what was stated in Section~\ref{sec:gdlamg}, namely that DD-$\alpha$AMG converges more slowly for $Q$ compared to $D$.
It also shows that $\Gamma_5$-preconditioning is beneficial in the case of GMRES smoothing whereas in the case of SAP smoothing, it loses efficiency compared to $Q$, although only by a small margin. 
Comparing the two smoothing methods for $D-\tau\Gamma_5$, we see that both methods perform nearly identical up to larger shifts, where SAP starts to be slightly more favorable. We do not expect this to be relevant for larger configurations, though, since there the spectrum is much more dense. Even when aiming for a large number of small eigenvalues, we certainly do not expect to end up with $\tau$-values as large as $0.2$ already.
Since the focus of this paper is on finding an efficient coarse grid operator, and not on optimizing the smoother, we stick to GMRES smoothing here. Implementing SAP instead of GMRES for $D-\tau\Gamma_5$ within the DD-$\alpha$AMG framework would require a more substantial remodeling of the DD-$\alpha$AMG code.

\paragraph{Impact of the coarse grid correction}
For an assessment of the impact of the coarse grid correction step we compute $100$ eigenvalues for configuration $1$, once using DD-$\alpha$AMG with GMRES smoothing to solve the correction equation, and once with a modification where we turned-off the coarse grid correction. This yields a generalized Davidson method where the $\Gamma_5$-preconditioned correction equation~\eqref{eq:g5_prec} is solved using FGMRES with the GMRES-steps of the smoother as a non-stationary preconditioner, i.e., GMRESR, the recursive GMRES method~\cite{GMRESrec}. Note that we do not yet include updating the multigrid hierarchy as the outer iterations proceeds.

\begin{figure}[ht]
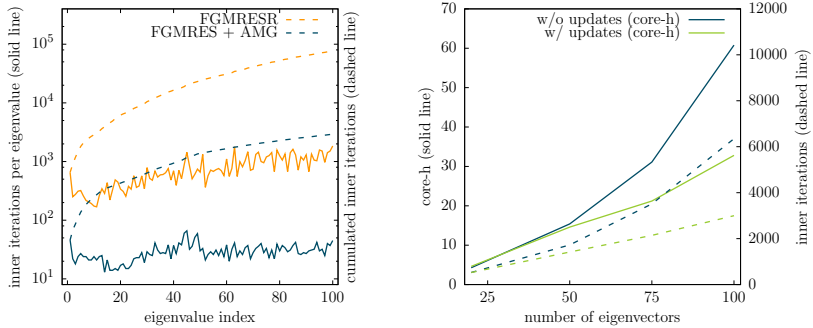

  \center\scalebox{0.5}{\fontsize{13}{15}\selectfont\input{./plots/coarse_grid}\hspace{-3em}\input{./plots/updates_vs_no_updates}}
  \caption{\textit{Left:} Computation of $100$ eigenvalues for configuration $1$ with GMRESR and FGMRES + AMG.
  \textit{Right:} Comparing eigenvalue scaling for configuration $2$ depending on whether eigenvalue information is provided for the interpolation operator.}
  \label{fig:coarse_grid}
\end{figure}

The left part of Figure~\ref{fig:coarse_grid} shows the FGMRES iterations spent on the correction equation for computing $100$ eigenvalues for the two variants.
We see that right from the beginning, including the coarse grid correction, i.e., using the multigrid method, reduces the iteration count by one order of magnitude compared to the ``pure'' GMRESR-Krylov subspace method. 
The required number of FGMRES iterations per eigenvalue stays constant at $\approx 30$ for the multigrid method, whereas GMRESR starts at $\approx 300$ and increases to $\approx 1,200$ for the last eigenvalues.
This is also reflected in CPU time, where on JUWELS multigrid preconditioning results in  $30$ core-h for the entire computation, whereas $217$ core-h were necessary when using GMRESR.
Thus multigrid gains one order of magnitude and, in addition, shows an improved scaling behaviour, despite the loss of local coherence for the larger eigenvalues.

The right part of Figure~\ref{fig:coarse_grid} now illustrates the additional benefits that we get from turning on the updating of the multigrid hierarchy, i.e., when performing full GD-$\lambda$AMG as described in Algorithm~\ref{alg:gd2}. 
Both approaches perform similarly as long as a small amount of eigenvalues is sought.
This changes substantially for already a moderate amount of eigenvalues to a point where interpolation updates save roughly a factor of two in both, number of iterations (dashed lines) and consumed core-h (solid lines).
In terms of iterations it is also noteworthy that interpolation updates lead to a nearly linear scaling with respect to the eigenvalue count, whereas in the other case the scaling is closer to quadratic.

\subsection{Scaling results}
\paragraph{Scaling with the lattice size.}\label{sec:lat_scaling}
We now compare GD-$\lambda$AMG, PRIMME and PARPACK in terms of scaling with respect to the lattice size. 
For this, we report the total core-h consumed for computing $100$ eigenpairs on configurations $1$ to $4$.

\begin{figure}[ht]
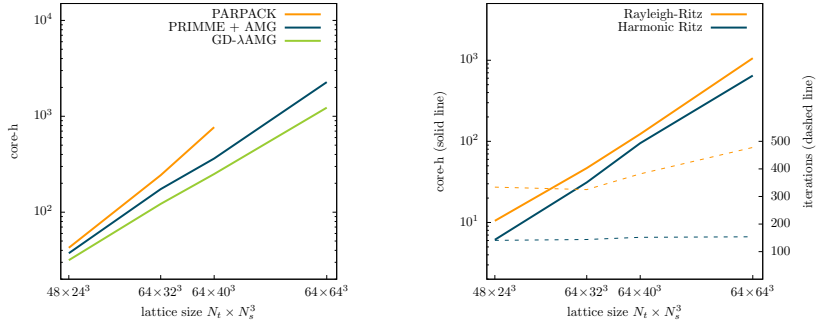

  \center\scalebox{0.5}{\input{./plots/lat_scaling}\hspace{-3em}\input{./plots/rr_vs_hr}}
  \caption{\textit{Left:} Computation of $100$ eigenvalues for $48\times24^3$ to $64\times64^3$ lattices for different methods. \textit{Right:} Comparison of Ritz and harmonic Ritz eigenpair extraction for different lattice sizes.}
  \label{fig:lat_scaling}
\end{figure}

Figure~\ref{fig:lat_scaling} shows that PRIMME and GD-$\lambda$AMG scale similarly with increasing lattice size. GD-$\lambda$AMG shows some improvement in core-h compared to PRIMME, and this improvement tends to get larger when increasing the lattice size.
The right part of Figure~\ref{fig:lat_scaling} shows, that this improvement might be partially attributed to the fact that we use a harmonic Ritz extraction. Here, we compare GD-$\lambda$AMG with its default harmonic Ritz extraction to a variant where we use the standard Rayleigh-Ritz extraction as is done in PRIMME. The Figure shows that harmonic Ritz extractions result in substantially less inner iterations. This also yields savings in computational time, which are smaller, due to the additional cost for the inner products. Note that for larger lattices eigenvalues become more clustered. The harmonic Ritz extraction is then more favorable compared to the Rayleigh-Ritz approach, since it is able to better separate the target eigenvalue from the neighboring ones.
PARPACK scales worse than the other methods, even when we use an unrealistic ``near optimal'' filter polynomial as we did here. In practice, i.e., when no guess for $|\lambda_{k+1}|$ is available, PARPACK's performance would fall even further behind.
Applying PARPACK to $Q^{-1}$ to make use of the efficient multigrid solver is way too costly, due to the necessity of very exact solves to maintain the Krylov structure.

\paragraph{Scaling with the number of eigenvalues.}\label{sec:ev_scaling}
Figure~\ref{fig:ev_scaling} reports results of a scaling study obtained for configuration~2. We just compare GD-$\lambda$AMG and PRIMME, since PARPACK is not competitive. 

\begin{figure}[ht]
  \center\scalebox{0.7}{\fontsize{13}{15}\selectfont\input{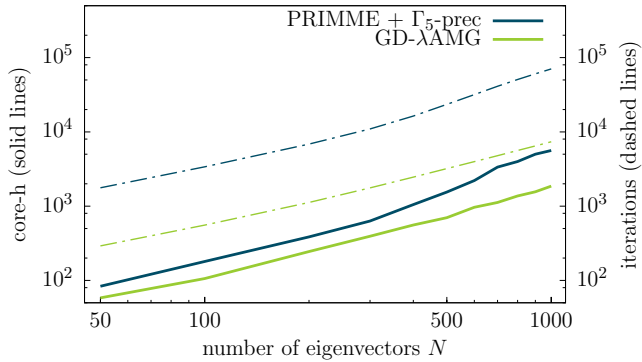}}
  \caption{Eigenvalue scaling in the range of $50$ to $1000$ eigenpairs on configuration $2$ with a lattice size of $64 \times 32^3$.}
  \label{fig:ev_scaling}
\end{figure}

The figure shows that GD-$\lambda$AMG has an advantage over PRIMME when larger numbers of eigenvalues are sought. GD-$\lambda$AMG needs up to one order of magnitude less iterations, which translates to a speed-up of $1.5$ for $50$ eigenvalues to up to more than three for $1\,000$ eigenvalues. This shows that the additional effort due to the adaptive construction of the multigrid hierarchy and the harmonic Ritz extraction is beneficial with respect to the overall performance.
GD-$\lambda$AMG and PRIMME both scale nearly linearly with respect to the number of eigenvalues sought, up to at least $300$ eigenvalues. Then PRIMME's performance starts to decrease more significantly compared to GD-$\lambda$AMG. We see that the increase in the overall computing time in PRIMME scales more than linearly with the number of iterations to be performed. This indicates that the non-adaptive multigrid solver used in PRIMME is getting increasingly less efficient, a situation that is remedied with the update strategy realized in GD-$\lambda$AMG.

\section{Spectral gap} \label{sec:gap}
This paper has a clear focus on algorithmic development to compute small eigenmodes. Fur purposes of illustration, we now include an example which relates our computed eigenvalues to the general approach of lattice QCD.

The \emph{spectral gap} is the relevant quantity for the stability of Monte Carlo simulations of lattice QCD \cite{DelDebbio:2005qa}.
It is defined as the smallest eigenvalue in magnitude of the Hermitian Wilson-Dirac operator $Q$.
If chiral symmetry was preserved, the spectral gap would be bounded from below by the bare current-quark mass $m$. 
Since the Wilson-Dirac operator breaks chiral symmetry it is possible that the gap is smaller than $m$.
In the following we adopt the notation of \cite{DelDebbio:2005qa} and
denote the eigenvalues of $Q^2$ by $\alpha_i$.

\begin{figure}[ht] 
  \center\scalebox{0.4}{\includegraphics{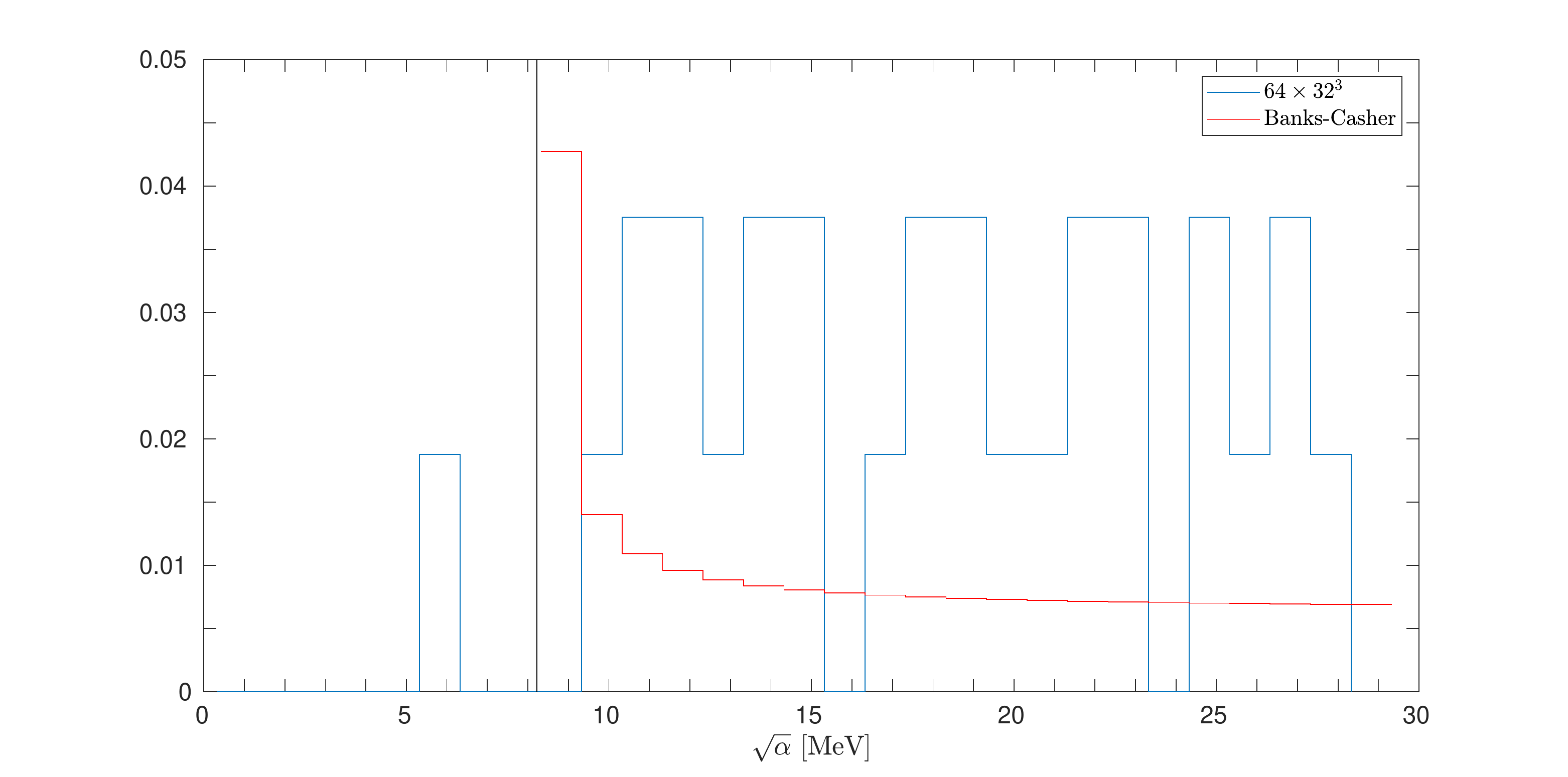}}
  \caption{Number of eigenvalues $\sqrt{\alpha_i}$ of $|Q|$
      per bin and $\mathrm{fm}^4$ at the low end of the spectrum 
      on configuration $2$ with a lattice size of $64 \times 32^3$.
      For comparison the same quantity in the continuum theory
      is also shown, cf. Eq.~\eqref{eq:banks-casher}, taking $m=8.22\,\mathrm{MeV}$ and $\Sigma=(250\,\mathrm{MeV})^3$.}
      \label{fig:ev_gap}
\end{figure}

Figure~\ref{fig:ev_gap} shows the number of eigenvalues $\sqrt{\alpha_i}$ per bin and $\mathrm{fm}^4$ (blue line) at the low end of the spectrum computed on configuration~$2$ with a lattice size of $64 \times 32^3$ and $N_\mathrm{f}=2$ mass-degenerate non-perturbatively O($a$) improved quarks, cf. Table~\ref{tab:confs}.
The lattice spacing is $a=0.071\,$fm from~\cite{Bali:2014gha} and the bin size is set to  $1\,$MeV.

It is interesting to compare these computational results to what we know analytically in the continuum theory in the infinite volume limit.
The spectral density can be computed from the Banks--Casher relation \cite{Banks:1979yr,DelDebbio:2005qa} asymptotically
\begin{equation}\label{eq:banks-casher}
  \tilde{\rho}(\sqrt{\alpha}) \stackrel{\alpha>m^2}{=}
  2\sqrt{\alpha}\left[
    \frac{\Sigma}{\pi\sqrt{\alpha-m^2}} + \mathrm{O}(1) \right] \,,
\end{equation}
which goes to infinity for $\sqrt{\alpha}\rightarrow m$ and is not defined for $\sqrt{\alpha}<m$, resulting in no eigenvalues smaller than $m$ for the continuum case. Note that the Banks-Casher relation can be used to determine the chiral condensate $\Sigma$ from the non-zero density of eigenmodes at the origin in infinite volume; cf.~\cite{Giusti:2008vb}.
In order to compare to our lattice results for configuration $2$ we compute the bare current-quark mass $m$ using \cite[Eq. (E.1)]{Fritzsch:2012wq}. 
The spectral gap (black vertical line in figure~\ref{fig:ev_gap}) is given by $\sqrt{\bar{\alpha}}=Z_{\mathrm{A}}m$~\cite{DelDebbio:2005qa}, which we evaluate using the axial-current renormalization constant $Z_{\mathrm{A}}$ from~\cite{DallaBrida:2018tpn}.
The resulting distribution of eigenvalues $\sqrt{\alpha}$ per bin and $\mathrm{fm}^4$ for the continuum operator (red line) is also plotted in Figure~\ref{fig:ev_gap}.
The comparison to the lattice data shows that on configuration $2$ there is a significant amount of eigenvalues smaller than $\sqrt{\bar{\alpha}}$, and the distribution close to the gap deviates from Equation~\eqref{eq:banks-casher}.
The figure thus illustrates quantitatively the deviations due to lattice artifacts which are not unexpected for the given lattice sizes. 

\section{Conclusion}\label{sec:conclusion}
In this paper we introduced an eigensolver built around the multigrid method DD-$\alpha$AMG for efficient shifted inversions within a generalized Davidson method. Several adaptations were included in order to improve eigenvalue scaling by at least a factor of three for a moderate to large amount of eigenvalues compared to current general purpose eigensolver software. This was accomplished by implementing a synergy between the generalized Davidson method and the AMG solver. Additionally, we incorporated several state-of-the-art techniques like locking, thick restarting and harmonic Ritz extraction to achieve a nearly linear eigenvalue scaling. We included a variety of numerical tests to verify the efficiency of our proposed adaptations.

We can now use the new algorithm to compute relatively many small eigenmodes of large configurations. This allows to advance deflation approaches for stochastic trace estimation as required in the computation of disconnected fermion loops. We plan to explore this further in cooperation with the European twisted mass collaboration. Note that the eigenpairs of the symmetrized Wilson and twisted mass operators differ by an imaginary shift in the eigenvalues only.

With the algorithm presented in this work it is now also possible to perform a systematic study of the spectral gap and the spectral density varying the lattice volume, the lattice spacing, and the quark mass. 
We plan to extend the work of~\cite{DelDebbio:2005qa} to the $N_\mathrm{f}=2+1$
O($a$) improved theory, cf.~\cite{Bruno:2014jqa} in the future.

\subsection{Acknowledgment}\label{sec:thx}
The authors are grateful for the fruitful collaboration with Benjamin M\"uller of the University of Mainz and our colleagues within the SFB-TR 55 from the Theoretical Physics department of the University of Regensburg, especially Gunnar Bali and his group for providing configurations.
We thank the J\"ulich Supercomputing Centre (JSC) for access to high performance computing resources through NIC grant HWU29.

\bibliographystyle{plain}
\bibliography{gdlamg}

\end{document}